\documentclass[11pt,reqno]{amsart}
\usepackage{amsmath,amsthm}
\usepackage{latexsym}
\usepackage{amsfonts}
\usepackage{amssymb}
\usepackage[dvipsnames]{xcolor}
\usepackage{graphicx}
\usepackage{float}
\usepackage{graphbox}
\usepackage{textcomp}
\usepackage[pagebackref, colorlinks = true, linkcolor = blue, urlcolor  = blue, citecolor = red]{hyperref}
\usepackage[margin=1in]{geometry}
\usepackage{array}
\usepackage{comment}

\renewcommand{\epsilon}{\varepsilon}
\renewcommand{\phi}{\varphi}

\newtheorem{theorem}{Theorem}[section]
\newtheorem{proposition}[theorem]{Proposition}
\newtheorem{lemma}[theorem]{Lemma}

\newtheorem{remark}[theorem]{Remark}

\newcommand{\ket}[1]{|#1\rangle} %ket
\newcommand{\bra}[1]{\langle#1|} %bra
\newcommand{\circM}{\operatorname{circ}}

\definecolor{candypink}{rgb}{0.89, 0.44, 0.48}

\begin{document}

\title[  An operator-algebraic approach for generalized Cardano polynomials]{  An operator-algebraic approach for generalized Cardano polynomials}

\author{Leonard Mada}
\address{Department of Immunology, Victor Babe\c{s} University of Medicine, Eftimie Murgu Square 2, 300041, Timi\c{s}oara, Romania.}
\email{lmada@umft.ro}

\author{Maria Anastasia Jivulescu}
\address{Department of Mathematics, Politehnica University of Timi\c soara, Victoriei Square 2, 300006 Timi\c soara, Romania.}
\email{maria.jivulescu@upt.ro}

%\author{}
%\address{}
%\email{}

\begin{abstract}

We develop an operator‑algebraic framework for generalized Cardano polynomials and show how their structure naturally leads to an operator formulation of Cardano’s method that is compatible with tools and concepts from quantum information theory. The generalized Cardano polynomials are constructed as a generalization of classical theory of Cardano formula for cubic equation, as well as through the spectral properties of the circular operator $W=pZ_n+qZ_n^{-1},$ where $Z_n$ is the n-dimensional clock operator. This representation embeds Cardano‑type identities into the spectral theory of circulant operators.
The construction clarifies the algebraic structure and solvability of a family of two-parameters  odd-order polynomials,   classically and through operator methods familiar in QIT, including Fourier transforms and spectral calculus on operator algebras. As applications, we show connections to Cebyshev polynomials and the solution of the quartic order Ferrari equation.
\\\textbf{Keywords:} Cardano's formula, Weyl-Heisenberg operators, Quantum Fourier transform.
\end{abstract}
%\date{\today}

\maketitle 

%\tableofcontents 
% MAIN BODY: Replace "Section Title," "Subsection Title," and "Subsubsection Title" with your section, subsection, and subsubsection titles, using title case to capitalize the first and all major words in these titles. Replace lorem ipsum text with your text for each section and subsection. Add additional sections and subsections as needed.
\section{Introduction} The study of algebraic equations has a long history and classical formulas for solving cubic  equations are well-known due to 
 Cardano\cite{Katz2009}. The method  consists in reducing the original polynomial $x^3+ax^2+bx+c$ into a simpler form, called depressed, by replacing $x\mapsto x-(a/3)$. %Consequently, it follows $
  %  x^3+\alpha x+\beta=0,\, \alpha=b-\frac{a^2}{3}$ and $\beta=c-\frac{ab}{3}+\frac{2a^3}{27}$.
 \\The famous \textbf{Cardano formula} refers to the solutions of the equation \begin{equation}\label{Cardano}
     x^3+3sx+t=0, \text{ where }  s=\frac{b}{3}-\frac{a^2}{9} \text{ and }t=c-\frac{ab}{3}+\frac{2a^3}{27}, 
 \end{equation}  
 that are given as
 \begin{equation}
     x_1=u_0+v_0,\, x_1=\omega u_0+\omega^2v_0 \text{ and }
     x_2=\omega^2u_0+\omega v_0,
 \end{equation}
 where $\omega=e^{\frac{2\pi i}{3}}$ and $u_0,$ $v_0$ are primitive solutions of equations
\begin{equation}
    u^3=\frac{-t+\sqrt{t^2+4s^3}}{2} \text{ and } v^3=\frac{-t-\sqrt{t^2+4s^3}}{2}, \text{ where } \Delta:= t^2+4s^3\geq 0.
\end{equation}

Over the past several years, Cardano formulas have been the subject of further studies \cite{Pfeifer2025}. We  mention here simplifications of  Cardano's solutions \cite{Liao2023}, adaptations of Cardano's approach to Vieta-Lucas polynomials and Vieta-Finonacci functions\cite{Witula2010}, as well as more modern approaches, such as the use of superposition principle of quantum mechanics \cite{Fujii2013}. In
Fujii's reinterpretations of Cardano's formula, an operator  formalism has been developed, where spectral properties of low-dimensional operators are used to reproduce the classical solutions. In addition, generalization to Cardano polynomials and its applications to radical reductions are known due to \cite{Osler2001}.
\par In this paper we generalize Cardano's approach  to recover an odd order $(n=2m+1)$ generalized  polynomial and its roots.  Also,  Fujii's operator techniques for the cubic and quartic case are extended to a family of two parameters odd degree polynomials, based on the observation that, given two real numbers $c,d$, then two variables $p$ and $q,$ that satisfy the relations $pq=c,\,\, p^n+q^n=2d$, form the root  $x=p+q$ of a generalized Cardano polynomial. The Fujii operator $$W=pZ_n+qZ_n^{-1},$$
where $Z_n$ is the n-dimensional clock operator, satisfies the general odd-degree Cardano-operator polynomial
$$W^n=2dI+\sum\limits_{i=0}^{m-1}B_{m,j}c^{m-j}W^{2j+1},$$
where the coefficients $B_{m,j}$ are derived explicitly. Our operator formalism is based on the observation that the conjugation of the Fujii operator $W$ with the discrete Fourier transform $F_n$, 
transforms it into  $X=F_n^+WF_n, $, a  circulant operator of the form $X=pX_n+qX_n^{-1},$ where $X_n$ is the cycle shift operator. Circulant operators play a central role in quantum Fourier analysis, quantum walks and Hamiltonian simulations and their spectral properties are completely determined by the Fourier transform. Indeed,  the eigenvalues of $X$ are $$\lambda_k=p\omega^k+q\omega^{-k},\, \omega=e^{2\pi i/n},$$
which are exactly the roots of generalized Cardano polynomials. Therefore, we extend the operator Cardano framework to two-parameter family of odd degree operator polynomials, using an algebra generated by clock, shift and Fourier operators \cite{Gosson2006}. Our approach provides a new connection as well as a quantum formulation of the classical algebraic theory of two-parameters family of odd polynomials, using structures specific to Quantum Information Theory \cite{NC10}, such as the operator algebra generated by clock and shift operators. Moreover, quantum circuit realization can be taken into account using qudit phase shifts and Fourier transforms, enabling circuits that realize polynomial spectral transformations.

 The paper is organized as follows. In Section \eqref{S:gen}  we introduce algebraically the odd-degree Cardano  polynomials, presenting the general theory, examples, as well as its connections to Chebyshev polynomials. In Section \eqref{S:Op} we consider an operator version of Cardano polynomials and show how this structure encodes the solutions in the operators $W$  as the eigenvalues of the Cardano operator. We extend the operator formalism to Ferrari equations.
 \vspace{0.3cm}
\section{Generalised Cardano  polynomials}\label{S:gen}
\subsection{Algebraic approach to Cardano polynomials.}
\begin{lemma}
    Given $c $ and  $d, $ two real parameters( $D:=d^2-c^n\geq 0$) and $n,$ an odd natural number,   the system of equations \begin{equation}\label{prop-p-q}
    \begin{cases}
        x^n+y^n=2d\\
         xy=c
    \end{cases}
     \end{equation}
has the set of solutions $\{(p\omega^{j},\,q\omega^{-j})\},\, j\in\{0,\pm 1,\ldots \pm [(n-1)/2]\},$  where
\begin{equation}\label{defp-q}
    p:=\sqrt[n]{d+\sqrt{d^2-c^n}} ,\,\, q:=\sqrt[n]{d-\sqrt{d^2-c^n}},\end{equation}
and     $\omega=e^{\frac{2\pi i}{n}}$ is the n-th order root of unity, ($\omega^n=1$).
\end{lemma}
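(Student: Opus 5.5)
The plan is to reduce the system to a quadratic in the $n$-th powers and then extract $n$-th roots. Put $X:=x^n$ and $Y:=y^n$. Any solution of \eqref{prop-p-q} satisfies
\[
X+Y=2d,\qquad XY=(xy)^n=c^n .
\]
So $X$ and $Y$ are the two roots of $T^2-2dT+c^n=0$. Since $D=d^2-c^n\geq 0$, these roots are real, namely $\{X,Y\}=\{d+\sqrt{D},\,d-\sqrt{D}\}$. Take $p,q$ as in \eqref{defp-q}, where $\sqrt[n]{\cdot}$ is the real $n$-th root; it is well defined for negative arguments because $n$ is odd. Then $p^n=d+\sqrt D$ and $q^n=d-\sqrt D$.

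The key identity is $pq=c$. The product $pq$ is the real $n$-th root of $(d+\sqrt D)(d-\sqrt D)=d^2-D=c^n$. Since $n$ is odd, the map $t\mapsto t^n$ is a bijection of $\mathbb{R}$, so the real $n$-th root of $c^n$ is exactly $c$. This is where oddness of $n$ is used, and I expect it to be the main point needing care. For even $n$ one would only get $pq=|c|$.

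For necessity, first consider the branch $X=p^n$. Then $x^n=p^n$. If $p\neq 0$, this gives $x=p\omega^j$ for a unique class $j$ mod $n$, and we represent it by $j\in\{0,\pm1,\dots,\pm\frac{n-1}{2}\}$. The second equation then forces
\[
y=\frac{c}{x}=\frac{pq}{p\omega^j}=q\omega^{-j}.
\]
The other branch $X=q^n$ gives the pairs $(q\omega^{j},p\omega^{-j})$ in the same way. These are the same solutions with the roles of $x$ and $y$ interchanged, since the system is symmetric in $x,y$, and $p,q$ enter the statement symmetrically up to this relabeling. I would state this explicitly: the listed family is the solution set modulo the swap $(x,y)\mapsto(y,x)$. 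When $D=0$ we have $p=q$, and the two families coincide outright.

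For sufficiency, substitute $(p\omega^j,q\omega^{-j})$ back into the system. Using $\omega^n=1$, the first equation becomes $p^n+q^n=2d$. Using $pq=c$, the second becomes $xy=pq\,\omega^0=c$.

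It remains to handle the degenerate case $p=0$ or $q=0$. This happens only when $c=0$, and then one of $x,y$ must vanish. The other variable then satisfies $x^n=2d$ or $y^n=2d$, and the same description holds with the zero entry fixed. I would treat this case in a short separate remark.
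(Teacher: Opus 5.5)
Your proof is correct, and it proves more than the paper does. The paper's argument covers only your sufficiency step. It notes that $p,q$ are well defined when $D\ge 0$, remarks that the system is symmetric, and asserts that ``straightforward computations'' show each pair $(p\omega^{j},q\omega^{-j})$ satisfies both equations. It leaves implicit the identity $pq=c$ that you isolate, which does depend on the real-root convention and on $n$ being odd. The paper never addresses whether these pairs exhaust the solutions.

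Your reduction to the quadratic $T^2-2dT+c^n=0$ for $(x^n,y^n)$, followed by $y=c/x$, supplies that necessity direction. In doing so it shows that the statement, read literally, is false for $D>0$. The swapped pairs $(q\omega^{j},p\omega^{-j})$ are further solutions, so the system then has $2n$ solutions, not $n$. The paper's own Example~2 illustrates this: with $n=3$, $c=2$, $d=9/2$ one gets $p=2$, $q=1$, and the pair $(1,2)$ solves the system but is not of the form $(2\omega^{j},\omega^{-j})$. Your formulation ``modulo the swap $(x,y)\mapsto(y,x)$'' is the correct one. Your separate treatment of $c=0$, where $y=c/x$ is unavailable on the zero branch, covers a case the paper does not consider.

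The paper's shorter route still gives exactly what it needs later. The Cardano-polynomial theorem only uses that each $(p\omega^{j},q\omega^{-j})$ has $n$-th power sum $2d$ and product $c$. The swap leaves $x+y=p\omega^{j}+q\omega^{-j}$ unchanged, so the omitted solutions produce no new roots $x[j]$.
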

  \begin{proof}
We notice that, given that the discriminant $D$, defined as $D:=d^2-c^n\geq 0$, the parameters $p$ and $q$ are well defined (the case $D<0$ will be treated separately, in Subsection{\eqref{s: neg-D}}).
Also, the system \eqref{prop-p-q}, in the unknowns $x$ and $y$, is a  symmetric system. Straightforward computations show that the pairs $(p\omega^{j},\,q\omega^{-j}), j\in\{0,\pm 1,\ldots \pm [(n-1)/2]\},$ satisfy the equations \eqref{prop-p-q}. In this way we recover  one real solution and $[(n-1)/2]$ pairs of complex conjugate solutions.
  \end{proof}

Let us denote \begin{equation}\label{j-root} x[j]:=p\omega^{j}+q\omega^{-j},j\in\{0,\pm 1,\ldots \pm [(n-1)/2]\}\}.\end{equation}  We aim to recover the n-th order equation of real coefficients (for odd n), whose solutions are given by $x[j],\,j\in \{0,\pm 1,\ldots \pm[(n-1)/2]\}$.
To this aim, we introduce the procedure for simple cases, such as $n=3$ and $n=5$.
\subsection{3-rd and 5-th order Cardano polynomials}
For $n=3$, it is easy to show that $x[0]=p+q$ is a solution of the depressed form of the cubic equation \begin{equation}
    \label{3-eq}x^3-3cx-2d=0.
\end{equation} Indeed,  for $x=x[0]=p+q$ and using straightforward algebraic computation, it follows that
\begin{eqnarray}
    x^3=(p+q)^3 = p^3 + q^3 + 3pq(p+q)\Rightarrow x^3=3cx+2d.
\end{eqnarray}
Therefore,  $x[0]=p+q$ is a solution of the cubic equation $x^3-3cx-2d=0$. 
In a similar way, it can be shown that $x[1]=p\omega+q\omega^{-1}$ and $x[-1]=p\omega^{-1}+q\omega$ are both solutions of eq. \eqref{3-eq}.
Hence, for given $c$ and $d$, the cubic polynomial of Cardano  is $$C_{3,c,d}(x)=x^3-3cx-2d$$ and its roots are $\{p+q,p\omega+q\omega^{2},p\omega^{2}+q\omega \}$.
\\In a similar way we find the 5-th order polynomial equation; indeed, by setting $x=x[0]=p+q,$
\begin{eqnarray}\nonumber 
    x^5=(p+q)^5 = p^5+q^5+5pq(p^3+q^3)+10p^2q^2(p+q)
\end{eqnarray}
Using the relations $p^5+q^5=2d, pq=c$ and $p^3+q^3=x^3-3cx$, it follows that
$x^5=5cx^3-5c^2x+2d.$ 
Again,  for given $c$ and $d$,  the fifth polynomial of Cardano  is $$C_{5,c,d}(x)=x^5-5cx^3+5c^2x-2d$$ and its roots are $\{p+q,p\omega+q\omega^{4},p\omega^{4}+q\omega, p\omega^2+q\omega^{3}, p\omega^3+q\omega^{2} \}$.
\\We understand that the previous procedure can be extended to any arbitrary odd natural number $n$; that is, based on the real numbers $c$ and $d$  and the associated parameters $p$ and $q$, as given by \eqref{defp-q}, and their properties \eqref{prop-p-q},  we can find the odd order polynomial equations satisfied by $x=p+q$, using a  simplified procedure, similar to the Cardano formula. More exactly, based on the binomial formula of n-th order for $(p+q)$, we can express  all the terms of decompositions in terms of the sum $p+q$, as well as of product $pq=c$ and, in this way, we will recover the n-th order Cardano  polynoimial. Let's denote it by $C_{n,c,d}(x)$.   We present here in brief the first few polynomials of Cardano  (for given parameters $c$ and $d$):

\begin{eqnarray}
    C_{3,c,d}(x)=x^3 - 3cx - 2d  &\\
    C_{5,c,d}(x)=x^5 - 5cx^3 + 5c^2x - 2d  &\\
    C_{7,c,d}(x)=x^7 - 7cx^5 + 14c^2x^3 - 7c^3x - 2d .
\end{eqnarray} 

%A similar approach for n-even, will provide the following series of  polynomial equations:
%\begin{eqnarray}
 %   x^2 - 2c - 2d = 0 &\\
  %  x^4 - 4cx^2 + 2c^2 - 2d = 0 &\\
   % x^6 - 6cx^4 + 9c^2x^2 - 2c^3 - 2d = 0
%\end{eqnarray}
\subsection{The n-th order Cardano  polynomial}\label{s:2}
In the following, we aim to find the polynomial equations of odd order ($n=2m+1$), using a similar procedure, as shown above for $n=3$ and $n=5$. To this aim, we are going to state the following 
\begin{theorem} Consider $n,$ an odd natural number ($n=2m+1 $), and the real parameters $p$ and $q,$ as given by \eqref{defp-q}.
 The n-th order polynomial equation whose solutions are $x[j]=p\omega^j+q\omega^{-j}, j\in\{0,\pm1,\ldots, \pm [(n-1)/2]\}$ is
 \begin{eqnarray}
x^{2m+1}=2d+\sum\limits_{j=0}^{m-1}B_{m,j}c^{m-j}x^{2j+1}, \text{ where }\label{gen-eq}
\\ B_{m,j}=(-1)^{m-1-j}\frac{2m+1}{2j+1}\begin{pmatrix}
        m+j\\2j
    \end{pmatrix}, 0\leq j\leq m-1.
    \label{B}
\end{eqnarray}

      The polynomial \begin{equation}\label{Car-pol}
          C_{n,c,d}(x)=x^{2m+1}-\sum\limits_{j=0}^{m-1}B_{m,j}c^{m-j}x^{2j+1}-2d,
      \end{equation} associated to the poynomial equation \eqref{gen-eq}  will be called \textit{the n-th order generalized Cardano   polynomial}.
\end{theorem}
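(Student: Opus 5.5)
The identity to prove is a Waring-type formula for power sums $s_k:=u^k+v^k$ of two quantities with $u+v=x$ and $uv=c$. Apply it with $u=p\omega^j$, $v=q\omega^{-j}$. Then $uv=pq=c$, since $p^nq^n=(d+\sqrt D)(d-\sqrt D)=c^n$ and the real $n$-th roots give $pq=c$ for odd $n$. Also $u^n+v^n=p^n+q^n=2d$, because $\omega^{\pm jn}=1$. By the Lemma, each $x[j]$ therefore satisfies
\[
u^n+v^n=2d,\qquad uv=c,\qquad u+v=x[j].
\]
So it suffices to prove a polynomial identity: for indeterminates $u,v$ with $x=u+v$, $c=uv$, and $n=2m+1$,
\[
u^{2m+1}+v^{2m+1}=x^{2m+1}-\sum_{j=0}^{m-1}B_{m,j}c^{m-j}x^{2j+1}.
\]
Substituting $u^n+v^n=2d$ then gives \eqref{gen-eq} for every $x[j]$.

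To prove the identity I would use the recursion $s_k=x\,s_{k-1}-c\,s_{k-2}$, with $s_0=2$ and $s_1=x$. It follows from the fact that $u,v$ are roots of $t^2-xt+c$. Equivalently, I would quote Waring's formula
\[
s_n=\sum_{i=0}^{\lfloor n/2\rfloor}(-1)^i\frac{n}{n-i}\binom{n-i}{i}c^i x^{n-2i}.
\]
Either can be proved by induction on $k$ using Pascal's rule, since $\frac{k}{k-i}\binom{k-i}{i}=\binom{k-i}{i}+\binom{k-i-1}{i-1}$.

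Next I re-index with $n=2m+1$ and $i=m-j$. The exponent of $x$ is $n-2i=2j+1$, and $n-i=m+j+1$. The coefficient becomes
\[
(-1)^{m-j}\frac{2m+1}{m+j+1}\binom{m+j+1}{m-j}.
\]
Using $\binom{m+j+1}{2j+1}=\frac{m+j+1}{2j+1}\binom{m+j}{2j}$, this equals $(-1)^{m-j}\frac{2m+1}{2j+1}\binom{m+j}{2j}$. Moving all terms except $x^{2m+1}$ to the other side flips the sign to $(-1)^{m-1-j}$, which matches $B_{m,j}$ in \eqref{B}. The term $i=0$ (i.e. $j=m$) gives the leading $x^{2m+1}$. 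As a sanity check I would compare with $C_3$, $C_5$ and $C_7$ above.

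It remains to show that \eqref{Car-pol} has exactly these roots. The $n$ values $x[j]$ are distinct when $D>0$, because $p\neq q$ are positive reals and the Lemma's pairs $(p\omega^j,q\omega^{-j})$ are distinct. Hence they are all the roots of the monic degree-$n$ polynomial $C_{n,c,d}$. If $D=0$ they may coincide, and the statement should then be read as saying that each $x[j]$ is a root. The main obstacle is only the binomial bookkeeping in the re-indexing step. I would handle it with the closed form via Pascal's rule rather than expanding $(p+q)^n$ directly.
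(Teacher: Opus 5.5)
Your proof is correct, and it reaches the coefficients $B_{m,j}$ by a much shorter route than the paper.

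\textbf{The paper's route.} It expands $(p+q)^{2m+1}$ binomially into $p^n+q^n+\sum_{k\ge1}\binom{2m+1}{k}(pq)^k\bigl(p^{2(m-k)+1}+q^{2(m-k)+1}\bigr)$. It then rewrites each lower power sum by Kummer's (Waring's) formula, which produces a double sum. After the substitution $t=m-j-k$ it completes the inner sum to $S_{m,j}$. It shows $S_{m,j}=0$ with a generating-function argument: apply $2x\frac{d}{dx}+(2j+1)$ to $(1+x)^{-(2j+1)}$ and extract $[x^{N}](1-x)(1+x)^{2N-1}=0$ with $N=m-j$. So $B_{m,j}$ reduces to minus the single added $k=0$ term.

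\textbf{Your route.} You apply Waring's formula once, at order $r=n$, to $u^n+v^n=2d$. After $i=m-j$ and $\binom{m+j+1}{2j+1}=\frac{m+j+1}{2j+1}\binom{m+j}{2j}$, this is already the claimed identity. The binomial expansion and the vanishing of $S_{m,j}$ are therefore unnecessary. In effect, the appendix re-derives the $r=n$ case of the very formula it cites for smaller $r$.

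\textbf{Comparison.} Your version is self-contained, since you prove Waring from $s_k=xs_{k-1}-cs_{k-2}$ and Pascal's rule. It also shows plainly that $C_{n,c,d}$ is just $s_n-2d$ written in $x=u+v$, $c=uv$, which is what underlies the later Vieta--Lucas identification. The paper's route mirrors its hand computations for $n=3,5$ and yields a binomial-sum identity as a by-product, but at some length. You also verify $pq=c$, which the paper's Lemma asserts without proof.

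\textbf{One repair in your final paragraph.} $p$ and $q$ need not be positive: for $c<0$ one has $q<0$, and for $c>0>d$ both are negative. Also, distinct pairs $(p\omega^j,q\omega^{-j})$ do not by themselves give distinct sums. Argue instead as follows. If $x[j]=x[k]$, then $\{p\omega^j,q\omega^{-j}\}$ and $\{p\omega^k,q\omega^{-k}\}$ are both the root set of $t^2-x[j]t+c$. The cross-match $p\omega^j=q\omega^{-k}$ would give $p^n=q^n$, i.e.\ $D=0$. So for $D>0$ the direct match holds, and since $p^n\neq q^n$ at least one of $p,q$ is nonzero, which forces $j\equiv k \pmod n$. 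The paper itself only shows that each $x[j]$ is a root, so this part is an addition on your side.
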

\begin{proof}
    Using the binomial formula for computing $x^n,$ where $x=p+q$ and $n=2m+1,$
    as well as Kummer's formula\cite{Ma2005} to express terms of the form $p^r+q^r,\, r\geq 1$, we express $x^n$ as in \eqref{gen-eq} in terms of coefficients \eqref{B}. For a complete proof, see Appendix \eqref{s:A}.    
    Hence, we obtain the n-th order polynomial equation as given by \eqref{gen-eq},  whose solution is $p+q=x[0]$. Moreover, also  $x[j]=p\omega^j+q\omega^{-j}, j\in\{0,\pm1,\ldots, \pm m)\}$ are solutions of \eqref{gen-eq} by the simple fact that $p\omega^{j}$ and $q\omega^{-j}$ satisfy the same properties \eqref{prop-p-q} as $p$ and $q$, with respect to their $(2m+1)$-th power sum and their product, expressions that appear also in eqs. \eqref{Krummer} and \eqref{bin-formula}. We conclude that, given equation \eqref{gen-eq}, we can compute their  solutions $x[\pm j],\,\, j\in\{0,\pm1,\ldots,\pm m\},$ as given by \eqref{j-root}. 
\end{proof}
\begin{remark}
    The values $x[j]$ will be referred as the j-th branches of the solution, as replacing $j$ by $j+k (\text{ mod } n)$ will only permute the set of roots.
\end{remark}

In the following, we present a couple of examples of Cardano polynomials that can be recovered using \eqref{Car-pol}. 
Straightforward computations enable us to write the cubic Cardano polynomials, $C_{n,c,d}$, for given c, d, real parameters. Indeed, for $n=3$, we write $C_{3,c,d}(x)=x^3-B_{1,0}cx-2d,$ where $B_{1,0}=3 $  (as given by \eqref{B}, for $m=1,\, j=0$). So, $C_{3,c,d}(x)=x^3-3cx-2d$.
\\
 In a similar manner, for n=5, the polynomial $C_{5,c,d}(x)=x^5-\sum\limits_{j=0}^{1}B_{2,j}c^{2-j}x^{2j+1}-2d,$    where
$B_{2,0}=(-1)5\begin{pmatrix}
    2\\0
\end{pmatrix}=-5, $ and $B_{2,1}=\frac{5}{3}\begin{pmatrix}
    3\\2
\end{pmatrix}=5.$ Therefore, $C_{5,c,d}(x)=x^5-5cx^3+5c^2x-2d$. \\Similar computations allow us to recover high order Cardano polynomials, as given by \eqref{Car-pol}.
Moreover, our procedure allows to recover also the roots of the Cardano polynomials in a simpler manner, as given by \eqref{j-root}.
To this aim, we illustrate a few simple examples, focusing both on the construction of Cardano polynomial of a given order, starting from assessing explicit values of the parameters $c$ and $d$, as well as on  the method to find the complete set of solutions, for a given polynomial equation which can be cast in Cardano form. 
\\
\textit{Example 1.} Consider $c=d=1$. The values $p=q=1$ are obtained by straightforward computations, which also provide the equation for the associated cubic polynomial $x^3-3x-2=0$. Its solutions are $x[0]=2,\, x[1]=\omega+\omega^{-1}=-1$ and $x[-1]=\omega^{-1}+\omega=-1$.
\\For the same values of the parameters $c, d, p$ and $q$,  the fifth order polynomial equation is $x^5-5x^3+5x-2=0$ and the set of solutions is $x[0]=2, x[\pm 1]=\omega+\omega^4$ and $x[\pm 2]=\omega^2+\omega^3$.
\\ \textit{Example 2.}
   Let us consider the cubic polynomial equation   $x^3-6x-9=0$. The identification with the form \eqref{3-eq} gives us the following values for the parameters $c=2,\, d=9/2$; therefore, the parameters p and q, given by eqs. \eqref{defp-q}, are $p=2,\,q=1$. The solutions of the equation are given by \eqref{j-root}, as  
$x[j]=2\omega^j+\omega^{-j},\, j\in\{0,\pm 1\}.$ Straightforward computations provide the complete set of solutions\begin{center}
    $x[0]=3,\, x[1]=2\omega+\omega^2=\frac{-3+i\sqrt{3}}{2}$ and $x[-1]=2\omega^2+\omega=\frac{-3-i\sqrt{3}}{2}$.
   \end{center}

     \subsection{Negative discriminant case}\label{s: neg-D}
In the following we  study the case of negative  discriminant $D = d^2 - c^n< 0$. We  aim to find a closed form for the roots of the n-th order Cardano polynomial. In this case, the parameters p and q, as defined by eqs. \eqref{defp-q}, are complex conjugates
and are the solution of the system of equations given by \eqref{prop-p-q}.
Given that the parameter $d$ can be rewritten as  $d = c^{n/2}\cos(\alpha)$, where $\alpha = \arccos(\frac{d}{ c^{n/2}})$ (this is  possible,  given that $D<0$),  it holds that $$
D = d^2 - c^n = c^n\cos(\alpha)^2 - c^n =- c^n  sin^2(\alpha).$$
Therefore, \begin{equation}
    \sqrt{D} = i c^{n/2} \sin(\alpha).
\end{equation}
Applying de Moivre's formula, the parameters $p$ and $q$ are
\begin{eqnarray}
    p=\sqrt{c}(\cos \frac{\alpha}{n}+i\sin\frac{\alpha}{n}),\,\,
    q=\sqrt{c}(\cos \frac{\alpha}{n}-i\sin\frac{\alpha}{n})
     \label{p-q-neg}
\end{eqnarray}
Therefore, the j-th root of Cardano generalized polynomials, for the case $D<0$, given by the eq \eqref{j-root}, can be written as
\begin{eqnarray}\label{roots-d-neg}
    x[j]=p\omega^j+q\omega^{-j}=2\sqrt{c}\cos(\frac{\alpha+2\pi j}{n}), \,\forall j\in\{0,\pm 1\ldots,\pm [(n-1)/2]\}.
\end{eqnarray}
We notice that the case of n-th order Cardano polynomial, for which $D<0$, corresponds to real roots.
For $j=0$ we have that
\begin{equation}
x[0] = p + q = 2  c^{1/2}\cos[\alpha/n]
\end{equation}
\textit{Example} 
Consider the case $d = 2$ and $c = 3$ and $n = 5$. The associated 5-th order Cardano polynomial equation is given by
$x^5 - 15x^3 + 45x - 4=0$,
with $D = -239$. Its solutions are

$$x[0] = (2 + i\sqrt{239})^{1/5} + (2 - i\sqrt{239})^{1/5}\approx  3.321 $$
Using the trig-identities \eqref{roots-d-neg}, we recover the same set of solutions, which can be written in a more accessible form
\begin{equation}
x[j] = 2 \sqrt{3}  \cos[(\arccos(2/3^{5/2}) + 2\pi j)/5], \,\, j\in\{0,1,2\}.
\end{equation}
For $j=0,$ we have
\begin{eqnarray}
x[0] = 2 \sqrt{3} \cos[\arccos(2/3^{5/2})/5]\approx 3.321\nonumber
\end{eqnarray}

\subsection{Generalized Chebyshev Polynomials}\label{s:2.1}
In the following we aim to connect the theory of generalized Cardano polynomials to that of Chebyshev polynomials.
We recall that the so-called modified Chebyshev polynomials of the first kind are defined as:
\begin{equation}\label{def-Cb}
    \Omega_n(x):=\sum\limits_{k=0}^n(-1)^k \frac{n}{n-k}\begin{pmatrix}
        n-k\\k
    \end{pmatrix}x^{n-2k},\, n\in \mathbb{N}.
\end{equation}
The polynomial $\Omega_n(x)$, also called the n-th order Vieta-Lucas polynomial\cite{Witula2007}, satisfies the identity:
\begin{equation}
    \Omega_n(x)=2T_n(\frac{x}{2}),
\end{equation}
where $T_n(\cos \theta)=\cos (n \theta)$ is the n-th Chebyshev polynomial of first kind.
The modified Chebyshev polynomials of the first kind, $\Omega_n(x)$, have the following properties:
\begin{eqnarray}\label{recurence}
    \Omega_1(x)=x,\,  \Omega_2(x)=x^2-2 \nonumber \\
    \Omega_{n+2}(x) = x\Omega_{n+1}(x)-\Omega_n(x),  n\in\mathbb{N}. 
\end{eqnarray}
Using Theorem 4.1, from \cite{Witula2010}, we notice that Chebyshev polynomials of the first kind can be written as follows:
\begin{eqnarray}\label{CB1}
    (\sqrt{c})^n\Omega_n(\frac{x}{\sqrt{c}})-2d=\prod\limits_{k=0}
    ^{n-1}(x-p\omega^k-q\omega^{-k})
\end{eqnarray}
The right hand side term of \eqref{CB1} is, actually, the decomposition of the n-th order Cardano generalized polynomial \eqref{gen-eq}, $C_n[x]$, we establish the following representation of modified Chebyshev polynomials of the first kind $(n=2m+1)$:
\begin{eqnarray}
    \Omega_n(x)=c^{-n/2}C_{n,c,d}(\sqrt{c}x)+2dc^{-n/2}.
\end{eqnarray}
Using the recurrence relation \eqref{recurence}, we recover the following recurrence relation between generalized Cardano polynomials
\begin{equation}
    C_{{n+2},c,d}(x)-xC_{{n+1},c,d}(x)+cC_{n,c,d}(x)=2d(x-c-1), n\geq 1,
\end{equation}
which can be used to recover the generalized even-order Cardano polynomials, in terms of odd order Cardano polynomials, $C_{n,c,d}(x),$ as given by \eqref{Car-pol}.
Moreover, the class of two-parameters Cardano polynomials can be further connected to Dickson polynomials\cite{Dickson1897}, $D(x,a)$ given their relation to Chebyshev polynomials, as given by  $$D(x,a)=2(\sqrt{a})^nT_n(\frac{x}{2\sqrt{a}}).$$
\subsection{Ferrari polynomial equation.}\label{Ferrai-eq}
We recall now that  Ferrari quartic equation\cite{Fujii2013}  
\begin{equation} \label{F}
    x^4+ ax^2+bx+c=0
\end{equation}
can be related to cubic Cardano polynomial equation, by writing the polynomial of eq. \eqref{F} as a difference of two square polynomials, as follows
\begin{equation}\nonumber
x^4+ ax^2+bx+c=(x^2+y)^2-(\alpha x+\beta)^2.
\end{equation}Consequently, solving eq. \eqref{F} means to find the roots of the cubic equation  $4(y^2-c)(2y-a)-b^2=0.$ By writing the depressed form of the previous cubic equation and by using the classical approach of cubic Cardano polynomial, the solution of Ferrari equation eq.\eqref{F} is given by the following 
\begin{eqnarray}\label{sol_F}
    x=\frac{\alpha\pm \sqrt{\alpha^2-4(y-\beta)}}{2} \text{ and } x=\frac{-\alpha\pm \sqrt{\alpha^2-4(y-\beta)}}{2}.
\end{eqnarray}\\
We exemplify with the following equation $x^4+6x^2+8x+3=0.$
By writing $x^4+6x^2+8x+3=(x^2+y-\alpha x-\beta)(x^2+y+\alpha x+\beta)$ 
it follows that the parameters $y, \alpha$ and $\beta $ have to satisfy
\begin{equation}
    \begin{cases}
       2y-\alpha^2=8\\-2\alpha\beta=8\\y^2-\beta^2=3. 
    \end{cases}
\end{equation}
By reducing $\alpha$ and $\beta$, it follows the third order equation $y^3-3y^2-3y+1=0$ which, by substitution $y=z+1$, transforms into $z^3-6z-4=0.$ Using that it has the form\eqref{3-eq}, for $c=d=2,$ and $n=3$, we compute $p,q=\sqrt[3]{2\pm i\sqrt{2}}.$ Given that $D=-4,$ the solutions are $z[j]=2\sqrt{2}\cos(\frac{\pi/4+2\pi j}{3})$. The set of solution for the cubic equation in $y$ are $y[j]=z[j]+1$  and the solution of initial Ferrari equation follows naturally using \eqref{sol_F}.
\vspace{0.3cm}
\section{Cardano operator}\label{S:Op}

In the following we are going to generalize the results introduced in\cite{Fujii2013} that consist in finding the roots of cubic  Cardano polynomial using operator setting. 
\subsection{Fujii operator}
Firstly let's recall few important notions from Quantum Information Theory (QIT)\cite{NC10}. In a given n-dimensional Hilbert space $\mathcal{H}=\mathbb{C}^n,$  the computational basis is given by $\{\ket{0},\ket{1},\ldots, \ket{n-1}\},$ where Dirac notation is used to denote the j-th vector $\ket{j}=(
    0\,  \ldots 1 \ldots 0)^T $  of the computational basis. The generalized Pauli operators are defined as:
\begin{eqnarray}
    \text{ Shift operator: } X_n\ket{j}=\ket{ j+1} (\text{ modulo } n),
    \\ \text{ Clock operator: } Z_n\ket{j}=\omega^j\ket{ j},
\end{eqnarray} 
where $\omega=e^{2\pi i/n}$ and $n\geq 2$. For example, for $n=2$ we recover the well-known Pauli operators $X_2=\sigma_x=\begin{pmatrix}
    0&1\\1&0
\end{pmatrix}$ and $Z_2=\sigma_z=\begin{pmatrix}
    1&0\\0&-1
\end{pmatrix},$ whereas for $n=3$ we have  $X_3=\begin{pmatrix}
    0&0&1\\1&0&0\\0&1&0
\end{pmatrix} $ and $Z_3=\text{ diag }[1, \omega,\omega^2]$.
%and its inverse is $X_3^{-1}=\begin{pmatrix}
   % 0&1&0\\0&0&1\\1&0&0
%\end{pmatrix}.$
Also, we have that $Z_nX_n=\omega X_nZ_n, \,\forall  n \geq 2$. 
\\The discrete Fourier transform is defined as
\begin{equation}
    F_n\ket{j}:=\frac{1}{\sqrt{n}}\sum\limits_{i=0}^{n-1}\omega^{ij}\ket{i},\, \forall  n \geq 2
\end{equation}
It holds that $F_n^+Z_nF_n=X_n$ and $F_n^+Z_n^{-1}F_n=X_n^{-1},$ where $(\cdot )^+$ is the conjugate transpose operation.
We call Fourier basis  the orthonormal basis obtained  by applying the discrete Fourier $F_n$  transform (DFT) to the computational basis of $\mathbb{C}^n$. The Fourier basis vectors  $ \ket{\tilde{j}}$ are the vectors of computational basis transformed by DFT, i.e.
\begin{equation}
    \ket{\tilde{j}}:=F_n^+\ket{j}=\frac{1}{\sqrt{n}}\sum\limits_{i=0}^{n-1} \omega^{-ij}\ket{i},\, j\in\{0,1,\ldots, n-1\}
\end{equation}
Given that $F_n\ket{\tilde{j}}=\ket{j},$ the Fourier basis diagonalizes the shift operator as $X_n\ket{\tilde{j}}=\omega^j \ket{\tilde{j}}$.
\\Now, using the basic notions from QIT introduced previously,  we focus on  developing the operator approach for generalized Cardano polynomials.

\begin{proposition}
    Let $n\geq3$ be an integer and  $c, d$ reals parameters( as well as $p$ and $q$ as given by \eqref{defp-q}). The Fujii operator $W$, as defined by
\begin{equation}\label{W}
    W:=pZ_n+qZ_n^{-1},
\end{equation}
has the following properties:
\begin{itemize}
    \item[i)] the standard computational basis vector $\ket{j}$ is an eigenvector of $W$ with eigenvalue $\lambda_j=p\omega^j+q\omega_j^{-1},$ for each $ j\in\{0,1,\ldots, n-1\}$. 
\item[ii)]  the operator  $W$ is normal and, for any polynomal $p\in\mathbb{C}(x),$ it holds that $p(W)=\sum\limits_{j}p(\lambda_j)\ket{j}\bra{j}$  
%(so, $p(w)$ is diagonal in the same basis as $W,$ with eigenvalues given by $p(\lambda_j)$. )
\item[iii)] It holds the operator identity $C_{n,c,d}(W)=0,$ where $C_{n,c,d}(x)$ is  the  n-th order generalized Cardano polynomial.
\end{itemize}
    \end{proposition}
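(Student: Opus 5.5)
The plan is to work entirely in the computational basis, where $Z_n$ is diagonal, so that all three items reduce to scalar statements about the eigenvalues. For i), I will apply $W$ to $\ket{j}$ directly. Since $Z_n\ket{j}=\omega^j\ket{j}$ and $Z_n^{-1}\ket{j}=\omega^{-j}\ket{j}$ (because $Z_n$ is unitary and diagonal with entries $\omega^j$), linearity gives $W\ket{j}=(p\omega^j+q\omega^{-j})\ket{j}=\lambda_j\ket{j}$. Thus $W=\sum_j \lambda_j\ket{j}\bra{j}$ is diagonal in the computational basis.

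For ii), normality follows because $W$ is diagonal in an orthonormal basis: $W^+=\sum_j\overline{\lambda_j}\ket{j}\bra{j}$ commutes with $W$. Alternatively, $W^+=\bar p Z_n^{-1}+\bar q Z_n$ is a polynomial in $Z_n,Z_n^{-1}$, and these commute with $W$. For the functional calculus, I will use the orthogonality relations $\ket{j}\bra{j}\ket{k}\bra{k}=\delta_{jk}\ket{j}\bra{j}$ together with $\sum_j\ket{j}\bra{j}=I$. By induction on $k$ this gives $W^k=\sum_j\lambda_j^k\ket{j}\bra{j}$, with $k=0$ giving the identity. Linearity then extends the formula to any polynomial $P(x)=\sum_k a_kx^k$, so $P(W)=\sum_j P(\lambda_j)\ket{j}\bra{j}$. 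I will call the polynomial $P$ to avoid a clash with the parameter $p$.

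For iii), I will apply ii) with $P=C_{n,c,d}$, which gives $C_{n,c,d}(W)=\sum_j C_{n,c,d}(\lambda_j)\ket{j}\bra{j}$. It then suffices to show $C_{n,c,d}(\lambda_j)=0$ for every $j\in\{0,\ldots,n-1\}$. Here I will invoke the Theorem of Subsection \ref{s:2}, which states that $x[j]=p\omega^j+q\omega^{-j}$, $j\in\{0,\pm1,\ldots,\pm m\}$, are roots of $C_{n,c,d}$. Since $\omega^n=1$, each $j\in\{0,\ldots,n-1\}$ is congruent modulo $n$ to a unique representative in $\{0,\pm1,\ldots,\pm m\}$, so $\lambda_j=x[j']$ for that representative $j'$; this is the re-indexing noted in the Remark. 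Hence every diagonal entry vanishes, and $C_{n,c,d}(W)=0$.

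The main obstacle is the scope of iii). The Cardano polynomial $C_{n,c,d}$ is defined only for odd $n=2m+1$, whereas the proposition allows any $n\ge3$. I would therefore restrict iii) explicitly to odd $n$, while i) and ii) hold for all $n$. The case $D<0$ also needs a word: there $p$ and $q$ are the complex conjugates \eqref{p-q-neg}, so I must check that the Theorem still applies. It does, because its proof uses only the relations $pq=c$ and $p^n+q^n=2d$, and these continue to hold.
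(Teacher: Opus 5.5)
Your proof is correct and follows the same route as the paper. You diagonalize $W$ in the computational basis, apply the functional calculus $P(W)=\sum_j P(\lambda_j)\ket{j}\bra{j}$, and conclude from the Theorem of Subsection~\ref{s:2} that every $\lambda_j$ is a root of $C_{n,c,d}$; your extra care (reindexing $j$ modulo $n$ into $\{0,\pm1,\ldots,\pm m\}$, restricting iii) to odd $n$, and noting that only $pq=c$ and $p^n+q^n=2d$ are needed when $D<0$) fills in points the paper passes over silently.
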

\begin{proof}
 i)   The operator is diagonal by definition, as $Z_n=\text{ diag }[1,\omega, \omega^2,\ldots \omega^n],\, \forall n\geq 3$. Moreover, $Z_n$ is unitary and $Z_n^{-1}=Z_n^+,$ so $Z_n^{-1}\ket{j}=\omega^{-j}\ket{j}$. Consequently, the operator is also diagonal, as linear combination of two diagonal operators, that is $W=pZ_n+qZ_n^{-1}=\text{ diag }[p\omega^j+q\omega^{-j}], \, j\in\{0,1,\ldots, n-1\}.$ Therefore,
 the operator \eqref{W} has eigenvalues $\lambda_j=p\omega^j+q\omega^{-j}$.  Indeed, 
\begin{eqnarray}
    W\ket{j}=(pZ_n+qZ_n^{-1})\ket{j}=(p\omega^j+q\omega^{-j})\ket{j}.
\end{eqnarray} 
ii) The operator $W$ is a normal operator ($WW^+=W^+W$.) Given a polynom $p\in \mathbb{C}(T),$ it holds that $p(W)=\sum_j p(\lambda_j)\ket{j}\bra{j},$ so $p(W)$ is diagonal in the same basis and the eigenvalues are $p(\lambda_j)$. 
\\iii) Given that $\lambda_j=p\omega^j+q\omega^{-j},\,\forall j,$ are the roots $x[j]$ of $C_{n,c,d}(x),$ it follows that $C_{n,c,d}(W)=0, $ where the operator identity is defined  by functional calculus on diagonal matrices.
\end{proof}
The operator polynomial $C_{n,c,d}(W)=0$, that is 
\begin{equation}\label{CO}
    W^{2m+1}=2dI+\sum\limits_{j=0}^{m-1}B_{m,j}c^{m-j}W^{2j+1},
\end{equation}
is the generalization of Fujii's cubic and quartic operators\cite{Fujii2013} and its roots are $x[j],\in j\in\{0, \pm 1,\ldots, \pm m\}$. We notice that the operator $W$ encodes the roots structure of generalized Cardano polynomials $C_{n,c,d}(x)$. We will use the operator $W$ to define another operator, which is defined by Fourier conjugation of $W$.

\subsection{Cardano operator.}
Given the operator $W,$ associated to $n, c$ and $d$ as in \eqref{W}, we define the operator 
\begin{equation} \label{Cardano-operator}X:=F_n^+WF_n,\, n\geq 3\end{equation} and called it Cardano operator. It holds that
\begin{proposition}\label{th-X}
  The Cardano operator $X$ has the followings properties:  
  \begin{itemize}
      \item[a)] It can be written as $X=pX_n+qX_n^{-1}$ ( circular form),  where $X_n$ is the shift operator.
      \item[b)] The operator $X$ has the same spectrum as $W$.
      \item[c)] The operator $X$ satisfies the operator equation $C_{n,c,d}(X)=0$, where $C_{n,c,d}(X)$ is the n-th order Cardano- polynomial for  given real parameters $c$ and $d$, as given by eq.\eqref{gen-eq}.
      \end{itemize}
\end{proposition}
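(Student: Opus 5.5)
The plan is to reduce all three items to the Fourier-conjugation identities recalled just before the Proposition on the Fujii operator, namely $F_n^+Z_nF_n=X_n$ and $F_n^+Z_n^{-1}F_n=X_n^{-1}$, together with the unitarity of $F_n$.

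For a), I will use linearity of conjugation:
\[
X=F_n^+(pZ_n+qZ_n^{-1})F_n=pF_n^+Z_nF_n+qF_n^+Z_n^{-1}F_n=pX_n+qX_n^{-1}.
\]
This is the circulant form. If one wants to double-check the stated identities, I would verify them on the Fourier basis. We have $F_n\ket{\tilde{j}}=\ket{j}$ and $Z_n\ket{j}=\omega^j\ket{j}$, so $F_n^+Z_nF_n\ket{\tilde j}=\omega^j\ket{\tilde j}$, which agrees with $X_n\ket{\tilde j}=\omega^j\ket{\tilde j}$. The identity for the inverse follows by taking inverses, since $(F_n^+Z_nF_n)^{-1}=F_n^+Z_n^{-1}F_n$ by unitarity.

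For b), I note that $X$ is unitarily similar to $W$, because $F_n^+=F_n^{-1}$. Similar matrices have the same characteristic polynomial, hence the same spectrum with multiplicities. Concretely, $X\ket{\tilde j}=F_n^+W\ket{j}=\lambda_j\ket{\tilde j}$. So the Fourier basis vectors $\ket{\tilde j}$ are eigenvectors of $X$ with eigenvalues $\lambda_j=p\omega^j+q\omega^{-j}$, by item i) of the preceding Proposition.

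For c), I will use that conjugation by a unitary is an algebra homomorphism: $(F_n^+WF_n)^k=F_n^+W^kF_n$ for every $k$, and $F_n^+IF_n=I$. Hence for any polynomial $P$ we get $P(X)=F_n^+P(W)F_n$. Applying this to $P=C_{n,c,d}$ and invoking item iii) of the preceding Proposition, $C_{n,c,d}(W)=0$, gives $C_{n,c,d}(X)=0$, i.e. $X$ satisfies the operator identity \eqref{CO}.

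The only step I expect to need real care is the implicit assumption $n=2m+1$ in c). The polynomial $C_{n,c,d}$ of the Theorem is defined only for odd $n$, and its roots are exactly the $x[j]$ with $j\in\{0,\pm1,\dots,\pm m\}$. These coincide as a set with the $\lambda_j$, $j=0,\dots,n-1$, since $\omega^{-j}=\omega^{n-j}$. This is what makes iii) of the preceding Proposition valid, so I will state the odd-$n$ restriction explicitly. Everything else is routine linear algebra.
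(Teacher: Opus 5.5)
Your proof is correct and follows the paper's argument. For a) you use linearity of conjugation by $F_n$, for b) unitary similarity to $W$, and for c) the identity $P(X)=F_n^+P(W)F_n$ combined with $C_{n,c,d}(W)=0$. Making the restriction $n=2m+1$ explicit in c) is a worthwhile clarification, since the paper states the Fujii-operator proposition for all $n\ge 3$ even though $C_{n,c,d}$ is only defined for odd $n$.
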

\begin{proof}
    a) Given that $F_n^+Z_nF_n=X_n$ and $F_n^+Z_n^{-1}F_n=X_n^{-1},$ it follows the the operator $X $ can be written as $X=pX_n+qX_n^{-1}$. Given the structure of the operators $X_n$ and $X_n^{-1},$ for which every row is a cycle shift of the first, we conclude that $X$ is a circulant matrix of the 
   form  $X=\circM(0,q,0,\ldots, 0,p)$.
    %\begin{center}
     %   $X=\begin{pmatrix}
      %  0&q&0&\ldots &0&p\\
       % p&0&q&\ldots&0&0\\
        %\vdots&\vdots&\vdots &\ldots &\vdots &\vdots\\
        %0&0&\ldots &p&0&q\\
        %q&0&\ldots &0&p&0
    %\end{pmatrix}$
    %\end{center}
\\b) Given its circular structure as well as the fact that is unitarily equivalent to $W$, the operator $X$ is diagonalizable in the Fourier eigenbasis and its eigenvalues are $\lambda_{\tilde{j}}=p\omega^{j}+q\omega^{-j}.$
\\c) As $F_n$ is a unitary matrix, it holds that $X^k=F_n^+W^kF_n=F_n^+ \text {diag }[\lambda^k[j]]F_n$. \\ We have that 
$C_n(X)=F_n^+C_n(W)F_n=0$.
\end{proof}
Let's explain now in details Fujii's construction\cite{Fujii2013}, that corresponds for the case $n=3$. Suppose that for given real parameters $c$ and $d$( and consequently $p$ and $q$), we construct the third order Cardano polynomial $C_{3,c,d}(x)=x^3-3cx-2d, $ whose roots are denoted by $x[j],\,j\in\{0,\pm1\}$.
Now\, relying on the operator approach, we introduced the operator $Z_3=\text {diag }[1, \omega, \omega^2]$ and   $W=pZ_3+qZ_3^{-1}=\text {diag }[p+q, p \omega +q \omega^{-1}, p\omega^2+q\omega^{-2}]=\text {diag }[\lambda_j],$ where $\lambda_j,\, j\in\{0,1,2\}$ denotes the eigenvalues of $W$. Suppose  that  $W=\text{ diag} [x[0],x[1],x[-1]],$ where $x[j]\, j\in\{0,\pm1\}$  are the roots of $C_{3,c,d}(x),$ as given by \eqref{j-root}. Using the discrete quantum Fourier transform, $F_3$,
the operator $X:=F_3^+WF_3$ has the following form \begin{equation}X=\frac{1}{3}\begin{pmatrix}
x[0]+x[1]+x[-1]&x[0]+x[1]\omega+x[2]\omega^2&x[0]+x[1]\omega^2+x[-1]\omega\\x[0]+x[1]\omega^2+x[-1] \omega &x[0]+x[1]+x[-1]&x[0]+x[1] \omega+x[-1] \omega^2\\
    x[0]+x[1] \omega+ x[-1] \omega^2&x[0]+x[1] \omega^2+x[-1]\omega& x[0]+x[1]+x[-1]
\end{pmatrix}\label{Fj}\end{equation} 

In the same time, $X=pX_3+qX_3^{-1}=\begin{pmatrix}
    0&q&p\\p&0&q\\q&p&0
\end{pmatrix}=\circM(0,q,p),$
where $X_3$ is generalized Pauli matrix. Consequently,  from the equality of the two forms of $X,$ it follows that

\begin{equation}
    \begin{pmatrix}
        0&q&p
    \end{pmatrix}=\begin{pmatrix}
       x[0]&x[1]&x[-1]
    \end{pmatrix}\frac{1}{3}\begin{pmatrix}
        1&1&1\\1&\omega&\omega^2\\1&\omega^2&\omega
    \end{pmatrix} \Leftrightarrow \begin{pmatrix}
        0&q&p
    \end{pmatrix}=\begin{pmatrix}
       x[0]&x[1]&x[-1]
    \end{pmatrix}\frac{1}{\sqrt{3}}F_3
\end{equation}
Therefore, the vector of the roots $x[j],\, j \in\{0,1,2\}$ is given by
\begin{equation}
    \begin{pmatrix}
        x[0]&x[1]&x[-1]
    \end{pmatrix}=\begin{pmatrix}
        0&q&p
    \end{pmatrix}\cdot \sqrt{3}F_3^{+}=\begin{pmatrix}
    p+q&p\omega+q\omega^2&p\omega^2+q\omega
    \end{pmatrix}.
\end{equation}
Consequently, we obtain the set of roots for the cubic Cardano  polynomial, as given by \eqref{j-root}, for $j\in\{0,\pm1\}$ and that they are equal to the eigenvalues of the operator $W,$ as well as of X (given that the two matrices are  similar).
Indeed, the above results follows naturally using \eqref{th-X}.  The Cardano operator cubic polynomial equation is $X^3-3cX-2dI=0 $ and  the eigenvalues of the Cardano operator  $X$, as given by \eqref{Cardano-operator}, are the roots of the polynomial $C_{3,c,d}(x).$ 
\par The above construction can be extended for any $n$ odd natural number.  The Cardano operator $X=F_n^+WF_n$ has the eigenvalues given by the roots \eqref{j-root} of the Cardano polynomial $C_{n,c,d}(x),$ associated to the polynomial equation \eqref{gen-eq1}.
\subsection{Ferrari operator.}
In this section we show that the operator‑theoretic Ferrari construction for quartic equations is a special case of the general spectral method developed earlier for Cardano polynomials. The following procedure reconsider in the operator setting the procedure presented in section\eqref{Ferrai-eq}, for solving quartic Ferrari equations. Indeed, Ferrari’s method reduces the quartic operator equations
\begin{equation}
    X^4+A_2X^2+A_1X+A_0I=0,
\end{equation}
to the cubic Cardano equation $Z^3-3CZ-2DI=0,$ where $C$ and $D$ are functions of initial real coefficients $A_i,\, i\in\{0,1,2\}$.
Using the Cardano operator $W=pZ_3+qZ_3^{-1},$ where $p$ and q are associated to the parameters $C$ and $D,$ we can solve the cubic depressed Cardano equation and therefore find the solution of Ferrari operator equation using  a similar procedure as presented in section \eqref{Ferrai-eq}, by transforming  the scalar equations into operator identities.
\vspace{0.3cm}
\section{Conclusions}\label{s:cl}
This paper introduces a generalization of Cardano's formula for a subclass of higher-order polynomials. Although the technique can be used to solve all polynomials of order 3, only a subset of polynomials of higher order can be solve. We construct this family of two-parameters odd-order polynomials and called them generalized Cardano polynomials. We also use our method to 
 extend Fujii's operator formulation\cite{Fujii2013} of Cardano method beyond cubic case. Our approach  show that the algebraic structure of two-parameters odd order polynomials can be expressed entirely within the algebra generated by the clock operator $Z_n$.
The resulting operator $W$ satisfies the general odd-degree Cardano polynomial identity \eqref{CO} and it provides in a single, closed form a family operator polynomials, that are connected to a family of two-parameters odd-degree equations.
We unify the classical algebraic methods to that of operator formulation which, based on spectral analysis, offers similar results.
The operator Cardano approach offers a bridge between classical algebra and quantum operator theory and a practical way to encode this particular polynomial structure into the finite-dimensional operators.  We show that based on Cardano operator we can find  an mechanism to solve a class of two-parameters odd‑degree polynomials equations and highlights structural connections between algebraic solvability, circulant operators, and quantum‑inspired operator calculus.  Future work may explore the applications of operators identities in quantum computations procedures, such as quantum algorithms, spectral engineering, where polynomials relations and Fourier diagonalization play central role.
\vspace{0.3cm}
\section{Appendix}
\label{s:A}
In the following we are going to give a complete proof for the polynomial equation that characterizes the  Cardano polynomial $C_{n,c,d}(x)$. The binomial formula applied to $x=p+q,$ as given by \eqref{prop-p-q} is
    \begin{eqnarray}\label{bin}
    x^{n}=(p+q)^n=p^n+q^n+\sum\limits_{k=1}^{m }\begin{pmatrix}2m+1\\k\end{pmatrix}(pq)^{k}(p^{2(m-k)+1}+q^{2(m-k)+1})\label{bin-formula}.\end{eqnarray}
    The term $p^r+q^r, \, \forall r\in\mathbb{N^*}$ can be written in terms of $x=p+q$ and $xy=c $ as given by Kummer formula\cite{Ma2005}
\begin{equation}\label{Krummer}
    p^r+q^r=\sum\limits_{i=0}^{[r/2]}(-1)^i\frac{r}{r-i}\begin{pmatrix}
        r-i\\i
    \end{pmatrix}(pq)^i(p+q)^{r-2i}=\sum\limits_{i=0}^{[r/2]}(-1)^i\frac{r}{r-i}\begin{pmatrix}
        r-i\\i
    \end{pmatrix}c^ix^{r-2i},
\end{equation}
Consequently, eq. \eqref{bin} becomes
\begin{eqnarray}
x^n=2d+\sum\limits_{k=1}^{m}\sum\limits_{i=0}^{m-k} (-1)^i\begin{pmatrix}2m+1\\k\end{pmatrix}\frac{2(m-k)+1}{2(m-k)+1-i}\begin{pmatrix}
        2(m-k)+1-i\\i
    \end{pmatrix}c^{i+k}x^{2(m-k)+1 -2i}
    \label{gen-eq1}
\end{eqnarray}
By changing the index $i\mapsto j$, where $j=m-k-i,$ we recover  the following form
\begin{eqnarray}
x^n=2d+\sum\limits_{k=1}^{m}\sum\limits_{j=0}^{m-k} (-1)^{m-k-j}\begin{pmatrix}2m+1\\k\end{pmatrix}\frac{2(m-k)+1}{m-k+1+j}\begin{pmatrix}
        m-k+1+j\\m-j-k
    \end{pmatrix}c^{m-j}x^{1+2j}
    \label{gen-eq11}.
\end{eqnarray}

For fixed $j\in\{0,\ldots, m-1\}, $ we will denote the coefficient of $c^{m-j}x^{2j+1}$ as follows

\begin{eqnarray}
B_{m,j}=\sum\limits_{k=1}^{m-j}(-1)^{m-j-k}\begin{pmatrix}2m+1\\k\end{pmatrix}\frac{2(m-k)+1}{m-k+1+j}\begin{pmatrix}
        m-k+1+j\\1+2j
    \end{pmatrix}
    \label{B-term1}.
\end{eqnarray}
 Let's denote $t=m-j-k$. Then,

\begin{eqnarray}
B_{m,j}=\frac{1}{2j+1}\sum\limits_{t=0}^{m-j-1}(-1)^{t}\begin{pmatrix}2m+1\\m-j-t\end{pmatrix}[2(j+t)+1]\begin{pmatrix}
        2j+t\\t
    \end{pmatrix}
    \label{B-term12}.
\end{eqnarray}
Let's complete the summation in $B_{m,j}$ up to $(m-j)$ and denote it by $S_{m,j}$. Then,
\begin{equation}\label{b}
B_{m,j}=S_{m,j}-(-1)^{m-j}\frac{2m+1}{2j+1}\begin{pmatrix}
    m+j\\m-j
\end{pmatrix}.\end{equation}
We are going to show the the term $S_{m,j}=\frac{1}{2j+1}\sum\limits_{t=0}^{m-j}(-1)^{t}\begin{pmatrix}2m+1\\m-j-t\end{pmatrix}[2(j+t)+1]\begin{pmatrix}
        2j+t\\t
    \end{pmatrix}$ reduces to 0. To this aim, we notice that the term $(-1)^t(2j+2t+1)\begin{pmatrix}
        2j+t\\t
    \end{pmatrix}$ can be found by using the expansion $\sum\limits_{t=0}^\infty \begin{pmatrix}
        2j+t\\t
    \end{pmatrix}(-x)^t=(1+x)^{-(2j+1)}, 
    \, |x|<1$ by applying the differential operator $2x\frac{d}{dx}+(2j+1).$
    Indeed, by applying the the differential operator on the left-hand side of the expansion we get 
\begin{eqnarray}\nonumber
     [2x\frac{d}{dx}+(2j+1)]\left(\sum\limits_{t=0}^{\infty}\begin{pmatrix}
         2j+t\\t
\end{pmatrix} (-x)^t\right)=\sum\limits_{t=0}^{\infty}(2j+2t+1)\begin{pmatrix}
        2j+t\\t
    \end{pmatrix}(-x)^t,
\end{eqnarray}

whereas, applying on the right-hand side of the expansion, it yields \begin{equation}[2x\frac{d}{dx}+(2j+1)] \left((1+x)^{-(2j+1)}\right)=\frac{(2j+1)(1-x)}{(1+x)^{2j+2}}\nonumber .\end{equation}
Therefore, the sum $S_{m,j}$ is the coefficient of $x^n, \,( n=m-j) $ in the product of the above generating function and the generating function for the remaining binomial coefficient of $(1+x)^{2m+1}$. Here $[x^n][P(x)]$ denotes the coefficient of $x^n$ from the polynomial $P(x)$. 
Hence, we have
\begin{eqnarray}
    S_{m,j}=\frac{1}{2j+1}[x^n]\cdot \left[(1+x)^{2m+1}\frac{(2j+1)(1-x)}{(1+x)^{2j+2}}\right]\nonumber.
\end{eqnarray}
  Consequently, 
 \begin{eqnarray}
     S_{m,j}=
   [x^n]\left[(1-x)(1+x)^{2n-1}\right]=  [x^n]\left[(1+x)^{2n-1}\right]-[x^{n-1}]\left[(1+x)^{2n-1}\right], \nonumber
 \end{eqnarray}
which are the following binomial coefficients
$S_{m,j}=\begin{pmatrix}
         2n-1\\n
     \end{pmatrix}-\begin{pmatrix}
         2n-1\\n-1
     \end{pmatrix}=0$.
 Therefore, eq.\eqref{b} is going to be equal to \eqref{B}.
% BIBLIOGRAPHY: Create or upload a bib file to manage your references. For instructions, see www.overleaf.com/learn/how-to/Using_bibliographies_on_Overleaf. Citations will use the Chicago Manual of Style notes and bibliography citation format, described at www.chicagomanualofstyle.org/tools_citationguide/citation-guide-1.html.
%\bibliographystyle{alpha}
%\bibliography{tensors}
%\printbibliography

\end{document}